\renewcommand{\@algocf@capt@plain}{above}
\newtheorem{theorem}{Theorem}
\newtheorem{definition}{Definition}
\newtheorem{proposition}{Proposition}
\def\BibTeX{{\rm B\kern-.05em{\sc i\kern-.025em b}\kern-.08em
    T\kern-.1667em\lower.7ex\hbox{E}\kern-.125emX}}
\begin{document}

\title{Optimal Transmission Policies for Energy Harvesting Age of Information Systems with Battery Recovery
}
\author{\IEEEauthorblockN{Caglar Tunc\IEEEauthorrefmark{1},
		Shivendra Panwar\IEEEauthorrefmark{1} }
\IEEEauthorblockA{\IEEEauthorrefmark{1}Department of Electrical and Computer Engineering,
NYU Tandon School of Engineering\\
NY, USA,
Email: \{ct1909, sp1832\}@nyu.edu}\thanks{This work is supported by the NY State Center for Advanced Technology in Telecommunications (CATT), NYU Wireless and an Ernst Weber Fellowship.}
\vspace{-8mm}}

\maketitle
\begin{abstract}
	We consider an energy harvesting information update system where a sensor is allowed to choose a transmission mode for each transmission, where each mode consists of a transmission power-error pair. We also incorporate the battery phenomenon called \textit{battery recovery effect} where a battery replenishes the deliverable energy if kept idle after discharge. For an energy-limited age of information (AoI) system, this phenomenon gives rise to the interesting trade-off of recovering energy after transmissions, at the cost of increased AoI. Considering two metrics, namely peak-age hitting probability and average age as the worst-case and average performance indicators, respectively, we propose a framework that formulates the optimal transmission scheme selection problem as a Markov Decision Process (MDP). We show that the gains obtained by considering both battery dynamics and adjustable transmission power together are much higher than the sum gain achieved if they are considered separately. We also propose a simple methodology to optimize the system performance taking into account worst-case and average performances jointly.
\end{abstract}

\section{Introduction}
\vspace{-1mm}
\emph{Age of information} (AoI) has attracted wide attention in the literature as a useful concept to analyze the freshness of the data in information update systems \cite{kaul_2011,kaul_2012}. 
The concept of AoI focuses on the analysis of low-power and energy-limited devices, such as sensors, which can take advantage of energy harvesting technologies as a sustainable energy resource for transmissions \cite{paradiso_2005}. In energy harvesting-powered sensory and monitoring systems, the energy harvesting profile plays a significant role in the functionality of the system because of the limitations introduced by the randomness of energy sources. In such scenarios, analyzing AoI-based performance of the system is of particular interest due to the energy-delay trade-off \cite{yates_2015}. Energy harvesting information update systems has been studied in the AoI literature for various scenarios \cite{baknina_2018,farazi_2018,feng_2018,feng_2018_noisy,bacinoglu_2017,bacinoglu_2018,yates_2015}. References \cite{feng_2018,feng_2018_noisy,bacinoglu_2017} investigate the optimal scheduling policies to minimize average AoI under transmission errors whereas the same metric is used in \cite{bacinoglu_2018,farazi_2018,yates_2015} for error-free channel. The study in \cite{baknina_2018} aims to identify the achievable average message rate and AoI for an energy harvesting AoI system. In these studies, energy units are assumed to be arriving randomly at each time slot, either according to a simple Bernouili \cite{baknina_2018} or Poisson distribution \cite{farazi_2018,feng_2018,feng_2018_noisy,bacinoglu_2017,bacinoglu_2018,yates_2015}. On the other hand, existing literature on the design of wireless sensor networks emphasizes the impact of effectively modeling the energy harvesting process on the analysis of the functionality and lifetime of the network \cite{kansal_2003,jiang_2005}. We incorporate such a model in this paper.

There are a few studies in the context of AoI that investigate the throughput-error trade-offs for the source/channel coding schemes by adjusting the blocklength for different arrival and service rates, using queuing analytic methods \cite{zhong_2017,sac_2018,devassy_2018}. However, when there is no data buffer but the transmitter has limited transmission opportunities due to energy constraints, channel errors and transmission/retransmission schemes play a significant role in the performance of an AoI system \cite{ceran_2018}. In such scenarios, allowing the transmitter to adapt the transmission mode and exploiting the power-error trade-off is a promising approach to improve the performance of the AoI system, which we focus on in this paper.

In this paper, we analyze optimal transmission policies for an energy harvesting sensor, where the main objective is to optimize the system performance in terms of AoI-related metrics. When evaluating the performance, we focus both on the worst-case and average performances and consider \emph{peak-age hitting probability}, which is the probability that the age hits a predetermined threshold, and \emph{average AoI} as the performance metrics. We use a finite-state Markov model for the energy harvesting process, which quantifies the amount of harvested energy at different states and can be used to model realistic random energy profiles. 
In the MDP formulation, we also incorporate the battery dynamic called \emph{battery recovery effect}, which refers to the ability of a battery to self-replenish deliverable energy when left idle after a discharge, due to its chemical properties \cite{chiasserini_1999}. Impacts of battery receovery on battery lifetime and its implications on traffic shaping have been well studied in the sensor literature \cite{poellabauer_2004,chau_2010, fu_2018}. 
For an AoI system, this phenomenon gives rise to the interesting trade-off between keeping the battery idle for some time period to allow energy recovery at the cost of increasing AoI, and transmitting information updates as frequently as possible with less recovered energy. 
Moreover, using adaptive transmission schemes by changing the transmission power for such low-power devices mitigates battery limitations, which has not been addressed in the AoI literature to the best of our knowledge. Adjusting the transmission power might be even more crucial when coupled with the effects of the battery recovery effect.

Our main contributions in this paper can be summarized as follows:
\begin{itemize}
	\item We propose an MDP framework to find the optimal transmission policies for an AoI system that incorporates the battery recovery effect and multiple TX modes.
	\item We show with numerical examples that the system benefits from jointly exploiting the recovery effect and multiple TX modes, in terms of two performance metrics, namely average AoI and peak-age hitting probability.
	\item After demonstrating that an average-optimal policy may actually perform poorly in terms of the worst-case performance, we propose an effective strategy that takes into account both average and worst-case performances.
\end{itemize}
The paper is organized as follows. In Section II, we describe the energy harvesting sensor model and the MDP formulation of the system. We propose the optimal scheduling and transmission mode selection policies in Section III. After numerically investigating the system performance for various scenarios in Section IV, we conclude the paper.

\section{System Model}
We consider an energy harvesting sensor that generates information update packets and transmits them to a base station (BS) over a discrete-time channel with transmission errors, as illustrated in Fig.~\ref{fig:sensor_model}. We assume that there is no data buffer to accumulate packets and the transmission of an update packet is initialized immediately after being generated. For the transmission of an update packet, the sensor chooses a \emph{transmission (TX) mode} from set $\mathcal{M}$ which contains $M$ candidate schemes. The transmission mode determines the power level to be used for the transmission and the corresponding error probability, where lower (higher) TX power will result in a larger (smaller) error probability. TX power and error probability in TX mode $i\in \mathcal{M}$ are denoted by $P_i$ and $p_i$, respectively. Transmission of an information update packet is assumed to take one time slot. We denote the AoI at time $t$ by $A(t)$. At the beginning of a time slot, AoI is increased if there is no update packet received at that time slot. If there is an update packet received by the BS, AoI is updated to one, which is the transmission time of the packet. On the other hand, if the transmission is unsuccessful, then the AoI keeps increasing. We assume that the BS transmits instantaneous and error-free acknowledgement (ACK).
\begin{figure}[h!]
	\vspace{-2mm}
	\centering
	\includegraphics[width=7cm]{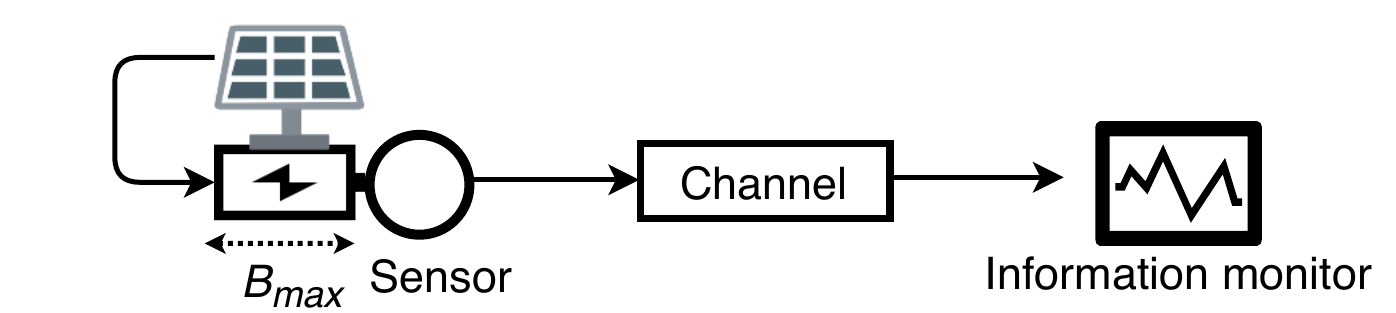}
	\caption{Energy harvesting sensor model.}
	\label{fig:sensor_model}
\end{figure}
\vspace{-3mm}

\subsection{Battery Recovery Effect}
In order to represent the battery recovery effect, we use a Markov model with $N_{rec}$ states, for which the recovery process takes place as follows. After a transmission with TX power $P_i$, $P_i/N_{rec}$ units of energy are recovered during an idle time slot with probability $p_{rec}\leq 1$ for at most $N_{rec}$ consecutive idle slots following the transmission. Note that this is a simplified version of the model in \cite{chiasserini_1999} but is still capable of capturing the random nature of energy recovery.



\subsection{Energy Harvesting Process}
We denote the capacity of the battery by $B_{max}$. We assume that the battery stores the harvested energy which is generated according to a discrete-time Markov chain with $N_H$ states. The energy harvester output power is fixed to $P_{H_i}$ in state $H_i$ for $i=1,...,N_H$, which means that the battery may store up to $P_{H_i}$ units of energy in a single time slot at harvester state $H_i$. The battery constraints $B(t)\leq B_{max}$ and $B(t)\geq0$ ensure that the battery level is always in the interval $[0,B_{max}]$, where $B(t)$ denotes the battery level at time $t$. Hence, if there is no transmission and $B(t)+P_{H_i}>B_{max}$, $B(t+1)$ is set to $B_{max}$ and the remaining harvested energy is lost. We denote the energy harvester state at time $t$ by $H(t)$, which changes in each time slot according to the transition matrix $\mathbf{P_H}=\{q_{ij}\}$ where $q_{ij}$ denotes the probability of transition from state $H_i$ to state $H_j$ at any given time slot for $i,j=1...,N_H$.

An update packet can be transmitted using mode $j$ at harvester state $i$ only if the following constraint is satisfied:
\begin{equation}
 B(t)-P_j+P_{rec}(t)+P_{H_i}\geq 0,
 \label{eq:energy}
\end{equation}
where $P_{rec}(t)$ is the recovered energy at time $t$. Otherwise, the sensor either remains idle or uses another mode that satisfies the constraint. If the energy constraint is satisfied and an update packet is generated at time $t$ in state $H(t)=H_i$, the battery level is changed to $\left( B(t)-P_j+P_{rec}(t)+P_{H_i}\right)^-$ where we use the notation $B^-=\min(B,B_{max})$ \footnote{If the operator $(x)^-$ is used for the AoI, it denotes $\min(x,A_{max})$.}.

A simple scenario with $M=2$, $P_1=2$, $P_2=4$ is illustrated in Fig.~\ref{fig:sample_path}. In this example, we use an on-off energy harvesting process with either zero or two units of energy being harvested in a single time slot, depending on the state of the process. Moreover, we set $p_{rec}<1$, $N_{rec}=2$, which indicates that energy can be recovered during two idle time slots following a transmission. Upward (downward) arrows represent transmissions (receptions), whereas a cross indicates a transmission error. Transmissions with modes 1 and 2 are denoted by solid and dashed arrows, respectively. Note that the raises in the energy in intervals $t\in[0,2]$, $[6,9]$ are due to the harvested energy, whereas between $t\in[3,4]$, energy is recovered. On the other hand, between $t\in[8,9]$, both the harvested and recovered units of energy are stored in the battery. Finally, recovery opportunities are missed between $t\in[5,7]$, $[9,10]$ because of the probabilistic recovery model.
\begin{figure}[h!]
	\vspace{-6mm}
	\centering
	\includegraphics[width=5cm]{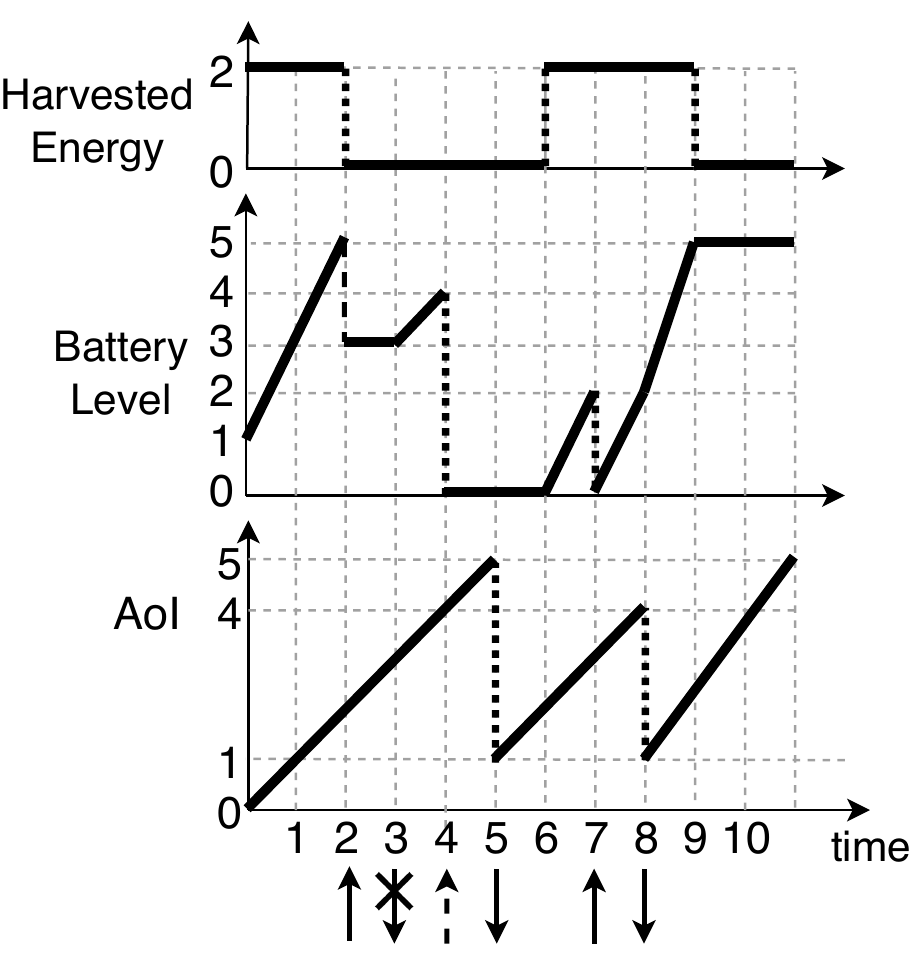}
	\caption{Sample path of the system for an on-off energy harvesting process and two transmission modes, represented by solid and dashed arrows.}
	\label{fig:sample_path}
\end{figure}
\vspace{-3mm}

\subsection{Markov Decision Process and the State Space}
We denote the system state at time $t$ by $S(t)$ which consists of the four-tuple $(A(t),T(t),H(t),B(t))$,
 where $T(t)\in\mathcal{T}=\{0,1,2,...,M,1^{(1)},...,1^{(N_{rec})},...,M^{(N_{rec})}\}$ denotes the system mode at time $t$. The first $M$ states in set $\mathcal{T}$ correspond to the TX modes $1,...,M$, whereas mode $i^{(j)}$ indicates that the sensor has been idle for $j$ consecutive time slots following a transmission in mode $i$, for $i=1,...,M$ and $j=1,...,N_{rec}$. Note that for the case with no battery recovery, $N_{rec}=0$. 

At the beginning of each time slot, the sensor selects a mode based on the state of the system. We denote the action taken by the sensor at time $t$ by $d(t)$. The set of actions that the sensor may take at any time slot is denoted by $\mathcal{D}$ and defined as $\mathcal{D}=\{0,1,...,M\}$, where $d(t)=m$ indicates the transmission of a new update packet with TX mode $m$, whereas the sensor remains idle if $d(t)=0$. We denote the transition probability from state $s$ to $s'$ when action $d$ is taken by $p_{ss'}(d)$, for $m\in\mathcal{D}$. Setting $p_{rec}=p\leq 1$, possible state transitions for $d(t)=0$ are given by $p_{ss'}(0) = $
\resizebox{0.89\linewidth}{!}{
	\begin{minipage}{\linewidth}
		\begin{eqnarray*}
			\begin{cases}
				q_{ij}p, & s = (a,m,H_i,b),\: s' = ((a+1)^-,m^{(1)},H_j,b')\\
				q_{ij}(1-p), & s = (a,m,H_i,b),\: s' = ((a+1)^-,m^{(1)},H_j,b'')\\
				q_{ij}p, & s = (a,m^{(n)},H_i,b),\: s' = ((a+1)^-,m^{(n+1)},H_j,b')\\
				q_{ij}(1-p), & s = (a,m^{(n)},H_i,b),\: s' = ((a+1)^-,m^{(n+1)},H_j,b'')\\
				q_{ij}, & s = (a,m^{(N_{rec})},H_i,b),\: s' = ((a+1)^-,0,H_j,b'')\\
				q_{ij}, & s = (a,0,H_i,b),\: s' = ((a+1)^-,0,H_j,b''),
			\end{cases}
		\end{eqnarray*}
	\end{minipage}
}
\vspace{2mm}

\noindent and similarly for $d(t)=m$, we have the transitions $p_{ss'}(m) = $
\resizebox{0.98\linewidth}{!}{
\begin{minipage}{\linewidth}
	\begin{eqnarray*}
		\begin{cases}
			q_{ij}(1-p_m), & s = (a,0,H_i,b),\: s' = (1,m,H_j,b_T)\\
			q_{ij}p_m, & s = (a,0,H_i,b),\: s' = ((a+1)^-,m,H_j,b_T)\\
			q_{ij}(1-p_m), & s = (a,m'^{(n')},H_i,b),\: s' = (1,m,H_j,b_T)\\
			q_{ij}p_m, & s = (a,m'^{(n')},H_i,b),\: s' = (1,m,H_j,b_T)\\
			q_{ij}(1-p_m), & s = (a,m',H_i,b),\: s' = (1,m,H_j,b_T)\\
			q_{ij}p_m, & s = (a,m',H_i,b),\: s' = ((a+1)^-,m,H_j,b_T)
		\end{cases}
	\end{eqnarray*}
\end{minipage}
}
\vspace{2mm}

\noindent where $m,m'=1,...,M$, $n=1,...,N_{rec}-1$, $n'=1,...,N_{rec}$, $i,j=1,...,N_H$, $b' =(b+P_{N_j}+P_m/N_{rec})^-$, $b'' =(b+P_{N_j})^-$, and $b_T=(b+P_{N_j}-P_m)^-\geq 0$.
\subsection{Performance Metrics for the AoI System}
We are interested in two particular performance metrics: i) the probability of the AoI hitting a predetermined upper-bound, denoted by $A_{max}$, and ii) the average AoI. Influenced by the term peak-age violation probability introduced in \cite{costa_2014}, the first performance metric is referred to as the peak-age hitting probability and is used to model the case where the information becomes useless (or less relevant to the system) after its age reaches $A_{max}$. This is in fact the case in several possible application areas of AoI; for instance, proximity sensors, or time-correlated data used for online learning with small time-windows. We use peak-age hitting probability to evaluate the worst-case performance, whereas we adopt the average AoI as the average performance indicator.
\vspace{-1mm}
\section{Optimal Update Packet Generation Policies}
In this section, we focus on obtaining \emph{stationary-deterministic} policies, which are defined as follows:
\begin{definition}
	A policy $\pi$ is stationary if the action taken at time $t$ and state $S(t)=s$ only depends on $s$, not the time index $t$. Similarly, a policy is called deterministic if it specifies the action at each state with certainty, i.e., with probability one.
\end{definition}
\subsection{Minimizing Peak-Age Hitting Probability}
We first characterize the optimal deterministic policy $\pi^{\ast}_p$ that minimizes the steady-state probability of the age hitting $A_{max}$, which is defined as $\pi^{\ast}_p(A_{max}) = \arg\min_{\pi\in \Pi} p_{\pi}(A_{max})$,
where $\Pi$ is the set of all possible policies and $p_{\pi}(A_{max})$ is the steady-state probability that $A(t)=A_{max}$ for the induced CTMC when policy $\pi$ is followed up to time $t$. Mathematically, we have $p_{\pi}(A_{max})=\lim\limits_{t\rightarrow \infty }\Pr(A(t)=A_{max}\lvert d_{0:t} \sim\pi)$ where $d_{0:t}$ is the sequence of decisions made up to time $t$. 


We use the following theorem to formulate the MDP where the objective is to maximize the long-run average reward:
\begin{theorem}
	\vspace{-1mm}
Minimizing the steady-state probability $p_{\pi}(A_{max})$ is equivalent to maximizing the average reward in the corresponding MDP with a negative reward associated with all states $s\in \mathcal{S}$ such that $s=(A_{max},m,H_i,b)$ for $m\in\mathcal{T}$, $H_i\in\{1,...,N_H\}$ and $b\in \{0,1,...,B_{max}\}$.
\end{theorem}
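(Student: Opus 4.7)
The plan is to reduce the peak-age minimization problem to a standard long-run average-reward MDP by designing a suitable reward function and invoking the ergodic average-reward characterization. First, I would define a reward function $r:\mathcal{S}\to\mathbb{R}$ by setting $r(s) = -c$ for some fixed $c>0$ whenever $s=(A_{max},m,H_i,b)$ for any $m\in\mathcal{T}$, $H_i\in\{1,\ldots,N_H\}$, $b\in\{0,\ldots,B_{max}\}$, and $r(s)=0$ otherwise. This is exactly the reward assignment described in the theorem statement.

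Next, I would fix an arbitrary stationary-deterministic policy $\pi\in\Pi$ and consider the induced discrete-time Markov chain on the finite state space $\mathcal{S}$. The long-run average reward under $\pi$ is defined as
\begin{equation*}
\bar{R}_\pi \;=\; \lim_{T\to\infty}\frac{1}{T}\sum_{t=0}^{T-1}\mathbb{E}\!\left[r(S(t))\mid d_{0:t}\sim\pi\right].
\end{equation*}
Assuming the induced chain is ergodic (or, more generally, unichain, which is the standard technical hypothesis for average-reward MDPs on finite state spaces), this limit exists and equals the stationary expectation $\sum_{s\in\mathcal{S}}\mu_\pi(s)\,r(s)$, where $\mu_\pi$ denotes the unique stationary distribution under $\pi$.

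Plugging in the reward function, only the states with $A(s)=A_{max}$ contribute, so
\begin{equation*}
\bar{R}_\pi \;=\; -c\!\!\sum_{s:\,A(s)=A_{max}}\!\!\mu_\pi(s) \;=\; -c\cdot p_\pi(A_{max}),
\end{equation*}
using the definition $p_\pi(A_{max})=\lim_{t\to\infty}\Pr(A(t)=A_{max}\mid d_{0:t}\sim\pi)$ and the equivalence of this Cesaro-type limit with the stationary probability of the event $\{A(s)=A_{max}\}$ under the induced chain. Since $c>0$, maximizing $\bar{R}_\pi$ over $\pi\in\Pi$ is equivalent to minimizing $p_\pi(A_{max})$, yielding the claimed equivalence and in particular the same argmax/argmin policy.

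The main obstacle is the ergodicity/unichain assumption needed so that the average-reward limit exists, is policy-independent in its initial distribution, and coincides with the stationary average. I would justify this by noting that the MDP has a finite state space, that the energy harvester Markov chain is assumed to have a well-defined stationary behavior, and that any stationary-deterministic policy combined with the probabilistic recovery and transmission-error dynamics keeps the induced chain irreducible on the set of states reachable from a canonical initial state; restricting attention to this recurrent class makes the average-reward formulation well-posed. The remaining steps — rewriting the time-average as a stationary expectation and isolating the contribution of the $A_{max}$-states — are routine once this is established.
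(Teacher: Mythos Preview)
Your proposal is correct and follows essentially the same approach as the paper: both assign a fixed negative reward to the $A_{max}$-states and zero elsewhere, then reduce the long-run average reward to $-c\,p_\pi(A_{max})$ via the ergodic (time-average equals stationary-average) identity. You supply a bit more justification of the unichain/ergodicity hypothesis than the paper does in the proof itself, but the argument is otherwise identical.
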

\begin{proof}
We denote the reward obtained at time $t$ by $R(t)$ and immediate reward associated with state $S(t)=s$ at time $t$ by $R_{s}$. We define $\mathcal{S}'\subset\mathcal{S}$ as the set of states having an AoI of $A_{max}$, each corresponding to a reward value of $r'<0$. We set the rewards for all the other states in the MDP to 0. The long-run average reward of the MDP in this case for a given policy $\pi$ is independent from the initial state and given by:\footnote{Note that since all the resulting Markov chains are ergodic, we represent the rewards by steady-state probabilities and use $v_p^{\pi}$ instead of $v_p^{\pi}(s)$. }
\vspace{-2mm}
\begin{align*}
v_p^{\pi}&=\liminf \limits_{T\rightarrow \infty} \mathbb{E}^{\pi}\left[ \frac{1}{T}\sum\limits_{t=1}^{T} R(t) \right]\\
&=\liminf \limits_{T\rightarrow \infty} \frac{1}{T}\sum\limits_{t=1}^{T}\mathbb{E}^{\pi}\left[  \sum_{s\in\mathcal{S}}R_{s}\Pr(S(t)=s\lvert d_{0:t-1}\sim \pi) \right]\\
&=\liminf \limits_{T\rightarrow \infty} \frac{1}{T}\sum\limits_{t=1}^{T}\mathbb{E}^{\pi}\left[  \sum_{s\in\mathcal{S}'}r'\Pr(S(t)=s\lvert d_{0:t-1}\sim \pi) \right]\\
&=\liminf \limits_{T\rightarrow \infty} \frac{1}{T}\sum\limits_{t=1}^{T}r'p_{\pi}(A_{max})=r'p_{\pi}(A_{max}).
\end{align*}
\end{proof}
\vspace{-3mm}
\subsection{Minimizing the Average Age}
We define $\mathcal{S}_k\subset \mathcal{S}$ as the set of states with AoI equal to $k$. Let $\bar{A}(\pi)$ denote the long-run average age of the system when policy $\pi$ is followed. We seek the optimal policy $\pi^{\ast}_a$ that minimizes $\bar{A}(\pi)$, defined as $\pi^{\ast}_a = \arg\min_{\pi\in \Pi} \bar{A}(\pi)$.
Each state $s\in\mathcal{S}_k$ has a reward of $-k$, which enables the use of a dual MDP problem as stated in the next theorem.
\begin{theorem}
Minimizing $\bar{A}(\pi)$ is equivalent to maximizing the long-run average reward of the MDP with reward $-k$ associated with $s\in\mathcal{S}_k$.
\label{thm:avg}
\end{theorem}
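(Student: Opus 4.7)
The plan is to mirror the argument used for Theorem~1, with the only substantive change being the shape of the reward function. In Theorem~1, a single negative reward was concentrated at the states with AoI equal to $A_{max}$; here I will place a graded reward $R_s = -k$ on every state $s \in \mathcal{S}_k$, so that the expected per-slot reward becomes exactly (minus) the expected AoI at that slot. Because the sets $\mathcal{S}_k$ partition $\mathcal{S}$ (with $k$ ranging over $1,\dots,A_{max}$ after the cap $(\cdot)^-$), this reward assignment is well-defined and total on $\mathcal{S}$.

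First I would write down the long-run average reward as in the proof of Theorem~1:
\[
v_a^{\pi} = \liminf_{T\to\infty} \frac{1}{T}\sum_{t=1}^{T} \mathbb{E}^{\pi}\!\left[\sum_{s\in\mathcal{S}} R_s \Pr(S(t)=s \mid d_{0:t-1}\sim \pi)\right].
\]
Then I would substitute $R_s = -k$ for $s\in\mathcal{S}_k$ and split the inner sum across the partition $\{\mathcal{S}_k\}_{k=1}^{A_{max}}$. Collecting terms by AoI value gives
\[
\sum_{s\in\mathcal{S}} R_s \Pr(S(t)=s) = -\sum_{k=1}^{A_{max}} k\, \Pr(A(t)=k \mid d_{0:t-1}\sim \pi) = -\mathbb{E}^{\pi}[A(t)].
\]
Taking the Cesàro limit then yields $v_a^{\pi} = -\bar{A}(\pi)$ using the definition of the long-run average age, exactly as in Theorem~1 where the same manipulation produced $r' p_{\pi}(A_{max})$.

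Since $v_a^{\pi} = -\bar{A}(\pi)$ for every stationary-deterministic policy $\pi$, the equivalence $\arg\max_{\pi} v_a^{\pi} = \arg\min_{\pi}\bar{A}(\pi)$ follows immediately, which is the claim. As in Theorem~1, I would rely on the footnote observation that the Markov chain induced by any stationary-deterministic policy is ergodic, so the $\liminf$ is in fact a limit and can be written in terms of stationary-state probabilities without ambiguity about the initial state.

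The only step that requires any care, rather than routine algebra, is ensuring that the interchange of expectation and summation over $\mathcal{S}$ is legitimate and that the AoI cap $(\cdot)^-$ is handled consistently: because $A(t)$ is bounded by $A_{max}$ and $\mathcal{S}$ is finite, both the sum and the expectation are finite, so no convergence subtleties arise. Thus the main obstacle is essentially bookkeeping — verifying that the reward-to-AoI identification is exact at every slot — after which the proof collapses onto the Theorem~1 template.
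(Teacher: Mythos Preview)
Your proposal is correct and follows essentially the same route as the paper's proof: write the long-run average reward as a Ces\`aro mean of expected per-slot rewards, substitute $R_s=-k$ on $\mathcal{S}_k$, group by AoI level to obtain $-\mathbb{E}^{\pi}[A(t)]$, and identify the limit with $-\bar{A}(\pi)$. The only cosmetic difference is that the paper writes the inner sum over $k=1,\dots,\infty$ while you (correctly, given the cap) truncate at $A_{max}$; your added remarks on ergodicity and finiteness of the state space only make the argument more careful than the original.
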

\begin{proof}
The long-run average reward for the defined MDP is
\resizebox{0.99\linewidth}{!}{
\begin{minipage}{\linewidth}
\begin{align*}
v_a^{\pi}&=\liminf \limits_{T\rightarrow \infty} \mathbb{E}^{\pi}\left[ \frac{1}{T}\sum\limits_{t=1}^{T} R(t) \right]\\
&=-\liminf \limits_{T\rightarrow \infty} \frac{1}{T}\sum\limits_{t=1}^{T}\mathbb{E}^{\pi}\left[  \sum_{k=1}^{\infty}\sum_{s\in\mathcal{S}_k}kPr(S(t)=s\lvert d_{0:t-1}\sim \pi) \right]\\
&=-\liminf \limits_{T\rightarrow \infty} \frac{1}{T}\sum\limits_{t=1}^{T}\sum_{k=1}^{\infty}kp_{\pi}(A_{k})=-\bar{A}(\pi).
\end{align*}
\end{minipage}}
\vspace{-1mm}
\end{proof}
In order to minimize the average age, it is required to solve an MDP with infinite state space due to unlimited age values. On the other hand, it is clear that the age value does not increase too much when a reasonably well performing policy is applied. This leads to using approximations of an AoI system with infinite state space using a finite state space MDP that upper bounds the age values, as in \cite{hsu_2017}. For this purpose, we use the iterative algorithm described in Algorithm~\ref{alg:finiteMDP} to find the value of $A_{max}$ for which the probability of hitting $A_{max}$ when policy $\pi^{\ast}_a$ is followed is negligible. Two input parameters to the algorithm are $K$, the maximum age value, and $\epsilon$, the threshold for the algorithm convergence. The algorithm computes the optimal $A_{max}$ for a given $\epsilon$, where the parameter $\epsilon$ trades off the computational complexity of the finite-MDP approximation against its accuracy.
\begin{algorithm}
\caption{Iterative finite-state MDP approximation}
\label{alg:finiteMDP}
\textbf{Initialize} $A_{max}=K$ for any $K$ such that $p_{\pi^{\ast}_a}(K)>\epsilon$ \\
 \While{$p_{\pi^{\ast}_a}(K)>\epsilon$}{
  Generate the state space with $A_{max}=K$\;
  Solve for $\pi^{\ast}_a = \arg\max_{\pi\in \Pi} v_a^{\pi}$\;
  \eIf{$p_{\pi^{\ast}_a}(K)>\epsilon$}{
   Increase $K$\;
   $A_{max}\gets K$\;
   }{
   Stop and record the optimal policy $\pi^{\ast}_a$ with $A_{max}=K$\;
  }
 }
\end{algorithm}
\vspace{-5mm}
\subsection{Minimizing the Weighted Sum of the Two Metrics }
In order to obtain a policy that takes into consideration both average and worst-case performances, we use a reward of $-\alpha k$ for the states with $A(t)=k$, $k=1,...,A_{max}-1$, and $-A_{max}$ when $A(t)=A_{max}$, and solve for the optimal policies that minimize the average reward. Note that $\alpha=0$ and $\alpha=1$ correspond to minimizing the peak-age hitting probability and average age, respectively. Thus, this new policy enables the tuning of the relative weight of average and worst-case performances. We refer to this policy as $\pi_{c}^{\ast}(\alpha)$.

Finally, we make the following observation that ensures the optimality of MDP problem formulations and their solutions:
\begin{proposition}
	For each metric, the resulting MDP problem has an average-optimal stationary deterministic solution.
\end{proposition}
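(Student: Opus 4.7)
The plan is to reduce each of the three reward formulations to a single framework — a finite-state, finite-action, bounded-reward, average-reward MDP — and then invoke the standard existence theorem of Puterman for such problems. First I would argue that the state space is finite in every case: the mode component $T(t)$ takes values in the finite set $\mathcal{T}$, the harvester state $H(t)$ lies in $\{H_1,\dots,H_{N_H}\}$, the battery level $B(t)$ lies in the finite grid $\{0,1,\dots,B_{max}\}$, and the age $A(t)$ is bounded by $A_{max}$ (either naturally, for the peak-age metric; or by the truncation produced by Algorithm~\ref{alg:finiteMDP}, for the average-age metric; or by construction for the weighted-sum metric). The action set $\mathcal{D}=\{0,1,\dots,M\}$ is finite as well, and the per-stage rewards defined in the proofs of Theorems~1 and~2 are bounded.

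Next I would verify the structural condition ensuring existence of a stationary deterministic average-optimal policy. The cleanest sufficient condition is that the MDP is weakly communicating: the state space splits into a (possibly empty) set of states that are transient under every stationary policy, together with a single closed class that is reachable from every state under some stationary policy. To establish this I would assume, as is standard, that the harvester transition matrix $\mathbf{P_H}$ is irreducible, and then exhibit the always-idle policy $d(t)\equiv 0$ as a witness. Under this policy the battery drifts monotonically towards $B_{max}$, the mode variable reaches $0$ after at most $N_{rec}$ steps, and the age saturates at $A_{max}$, so from any initial state the chain is absorbed into the class of states of the form $(A_{max},0,H_i,B_{max})$; irreducibility of $\mathbf{P_H}$ makes this class communicate, so every state in $\mathcal{S}$ reaches every state in this class under a common stationary rule.

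With finiteness, boundedness and weak communication in hand, I would apply the classical theorem (e.g., Puterman, Theorem~8.4.5 / 9.1.8) that guarantees the existence of a stationary deterministic policy attaining the optimal long-run average reward in the resulting MDP. Combined with Theorems~1 and~2, which equate each of the two age-related criteria with the long-run average reward of an appropriate MDP, and noting that the weighted-sum criterion is just another bounded-reward specification on the same state-action structure, this yields the proposition for all three metrics simultaneously.

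The main obstacle is the weak-communication step: it would be tempting but slightly imprecise to simply invoke ``finite MDP $\Rightarrow$ stationary deterministic optimum'', since multichain finite average-reward MDPs need not admit such an optimum without a chain-structure assumption. Hence the explicit always-idle construction above is the real content of the proof; every other step reduces to citing a textbook result.
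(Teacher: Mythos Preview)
Your argument is sound and strictly more detailed than the paper's. The paper's proof is a one-line appeal: the state space is finite, hence by Puterman, Ch.~8, a stationary deterministic average-optimal policy exists. It does not spell out any chain-structure condition; the unichain/aperiodicity discussion appears only later, and only in connection with convergence of value iteration, not with existence.

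Your route differs in that you explicitly establish weak communication via the always-idle policy before invoking Puterman. This is correct and arguably cleaner, and it buys you a constant optimal gain (independent of the initial state), which the paper tacitly uses elsewhere. One small correction, though: your closing remark that ``multichain finite average-reward MDPs need not admit such an optimum without a chain-structure assumption'' overstates the case. For finite state and action spaces, a stationary deterministic average-optimal policy exists even in the multichain setting (Puterman, Theorem~9.1.8); what fails without a communication-type assumption is constancy of the optimal gain and convergence of standard value iteration, not existence. So your weak-communication step is useful but not, as you claim, ``the real content'' needed for the proposition as stated.
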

\begin{proof}
 The proof is based on \cite[Ch.8]{puterman_2010}. Note that for both problems, the state space of the resulting Markov chain is finite, which suffices for a stationary-deterministic policy to exist.
 \vspace{-2mm}
\end{proof}

We use \emph{value-iteration algorithm} to obtain $\epsilon_c$-optimal solutions of the proposed MDP problems with a convergence parameter of $\epsilon_c=10^{-15}$, which performs quite well for our problem and is within an acceptable convergence time. Although we do not provide a formal proof for the convergence of the algorithm, we outline the main steps as follows. First, we notice that all feasible policies result in a unichain solution, with some states in the original MDP formulations possibly removed because of the particular policy. Moreover, all resulting Markov chains are aperiodic. To see this, consider an energy harvesting process with more than one state which either transits into another state or remains in the same state at a given time slot. This means that after leaving any state, that state can be traversed again both after $n$ and $n+1$ time slots for some $n> 1$. Hence, the resulting Markov chain for all energy profiles of interest have a period of one, i.e., are aperiodic. For unichain and aperiodic MDPs, convergence of the value-iteration algorithm in finite number of steps is proved in \cite[Ch.8]{puterman_2010}.
\section{Numerical Examples}
We set $A_{max}=20$ when we compute $\pi^{\ast}_p(A_{max})$ unless otherwise stated. For the average age, we run Algorithm~\ref{alg:finiteMDP} for each case separately to obtain the optimal $A_{max}$. We consider an on-off energy harvesting process and set $q_{12}=q_{21}=0.1$ and $q_{11}=q_{22}=0.9$.

In the first example, we investigate how the system benefits from the battery recovery effect and having two TX modes. For this purpose, we consider two TX modes that comply with MCS-5 index in LTE \cite{ikuno_2010}. TX mode 1 corresponds to a low-power transmission scheme with $P_1=3$ and a higher error probability $p_1=0.4$, whereas the high-power TX mode with $P_2=6$, mode 2, has a lower probability of error of $p_2=10^{-3}$. The first observation we make from Figures~\ref{fig:ex1age}(a) and \ref{fig:ex1prob}(a) is that incorporating the battery recovery effect significantly improves the system performance in terms of both metrics. Moreover, having both low and high power TX modes aids in reducing both performance metrics significantly, especially when the battery recovery effect is also considered. Finally, Figures~\ref{fig:ex1age}(b) and \ref{fig:ex1prob}(b) depict that the optimal policy that minimizes one metric may actually perform quite poorly in terms of the other one. For instance, for the case with battery recovery and both TX modes available, when the average age is minimized with policy $\pi_a^*$, the resulting peak-age hitting probability is around $4\times10^{-4}$, which can be reduced to $10^{-10}$ when policy $\pi_p^*$ is followed. 
\begin{figure}[h!]
	\centering
	\includegraphics[width=8.7cm]{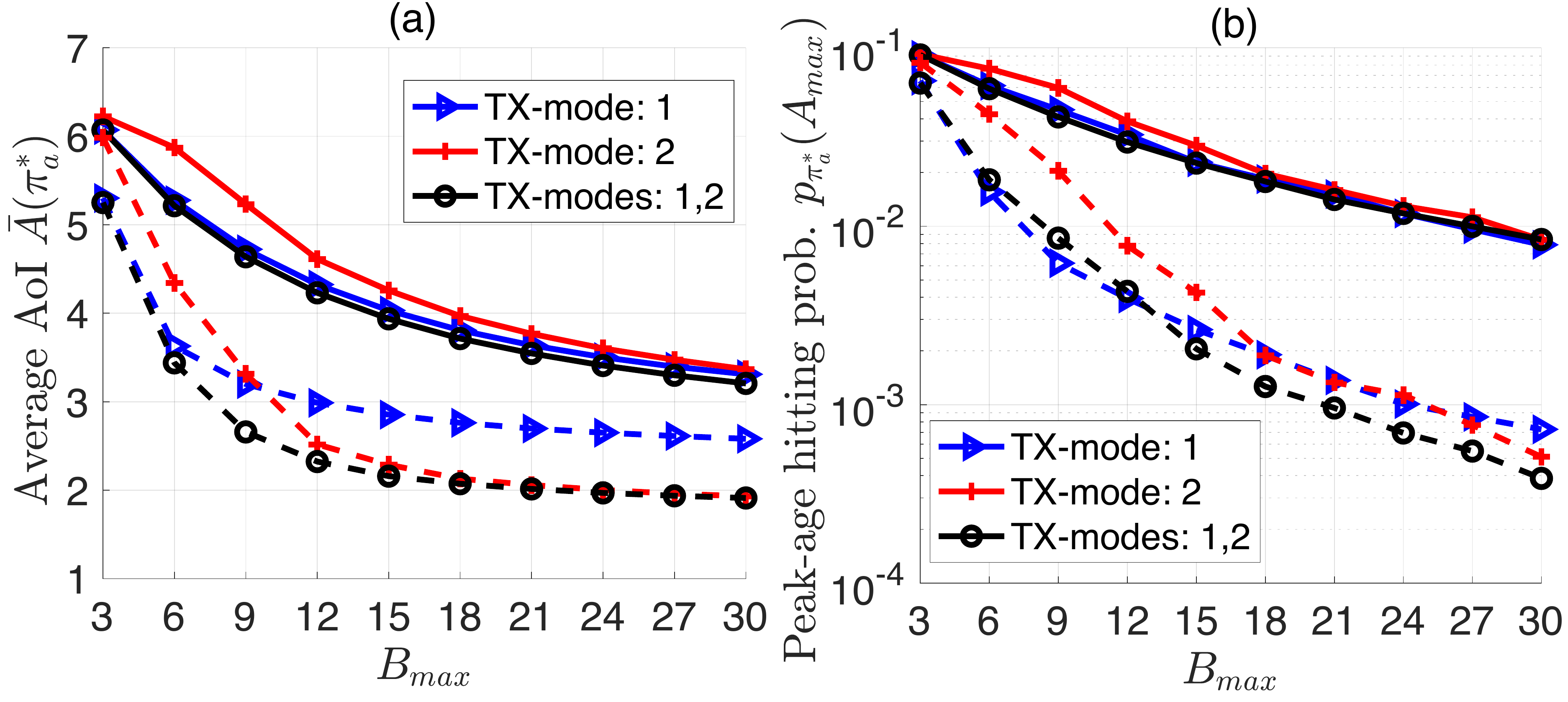}
	\caption{(a) Optimal policies with minimum average AoI and (b) corresponding peak-age hitting probabilities, as functions of $B_{max}$. Dashed (solid) lines corresponds to the scenarios with (without) recovery effect.}
	\label{fig:ex1age}
\end{figure}
\begin{figure}[h!]
	\centering
	\includegraphics[width=8.7cm]{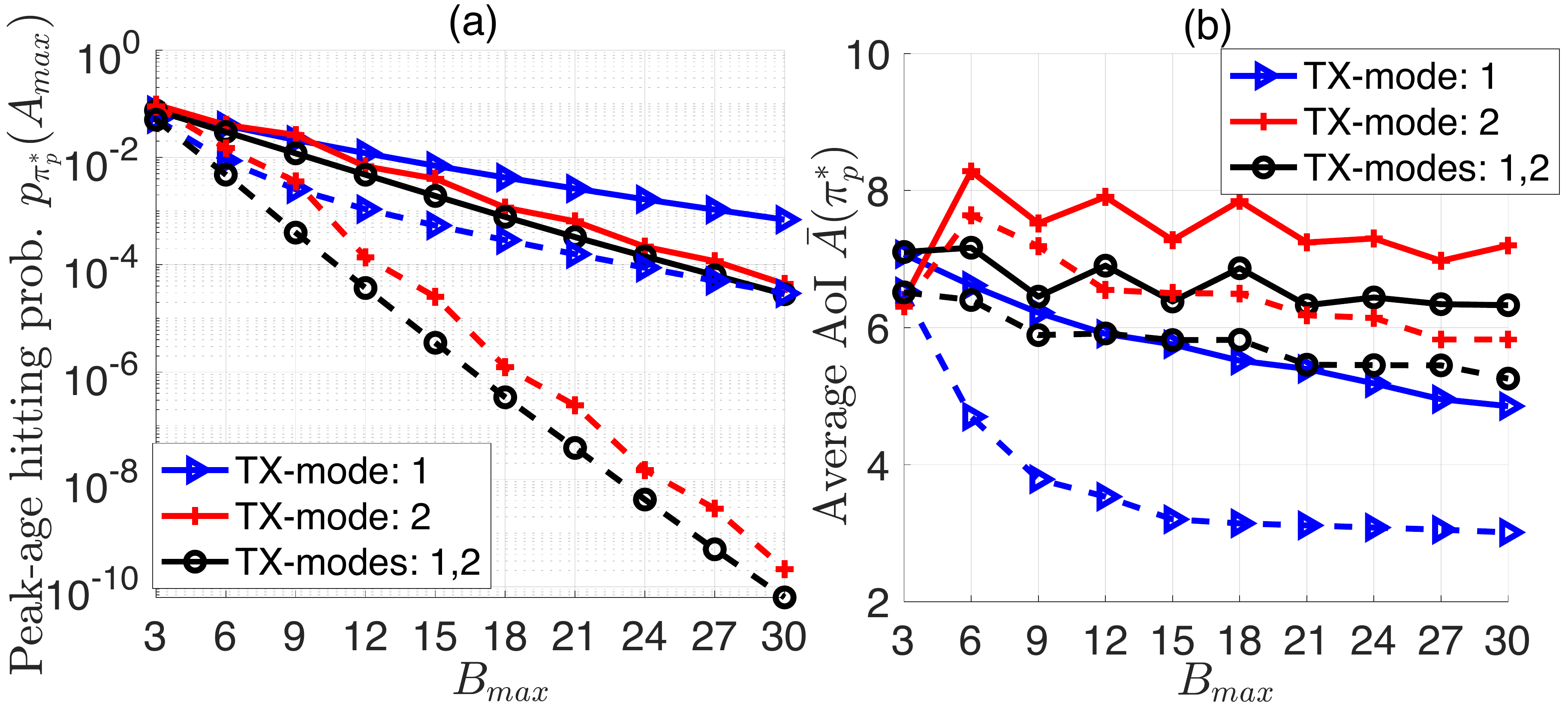}
	\caption{(a) Optimal policies with minimum peak-age hitting probabilities and (b) corresponding average AoI, as functions of $B_{max}$.  Dashed (solid) lines corresponds to the scenarios with (without) recovery effect.}
	\label{fig:ex1prob}
	\vspace{-2mm}
\end{figure}

In the second example, we investigate the performance of joint-optimal policy $\pi_c^*(\alpha)$ for varying values of $\alpha$ and three different values of battery recovery probability $p_{rec}$. We illustrate the average age and peak-age hitting probability achieved with the joint-optimal policy $\pi_c^*(\alpha)$ in Figures~\ref{fig:ex2comb}(a) and \ref{fig:ex2comb}(b), respectively, which reveal how average and worst-case performances can be traded off by changing $\alpha$. We also observe from these two figures that recovering the consumed energy with higher probability benefits the system more, as expected. Finally, from Figures~\ref{fig:ex2comb}(c) and \ref{fig:ex2comb}(d), we observe that the system tends to operate at higher average TX power ($\bar{P}_T$) and lower average battery level ($\bar{B}$) in general for increased $\alpha$. This is mainly because when more weight is given to minimizing the average age, the sensor does not wait until the battery level, and hence the age, increases too much and also favors the high power (or low error) mode to reduce the age rapidly, especially for higher $p_{rec}$. However, this behavior also depends on the set of parameters, including error probability and power of each TX mode, and we may observe a different trend for a different setting.
\begin{figure}[h!]
	\centering
	\includegraphics[width=8.6cm]{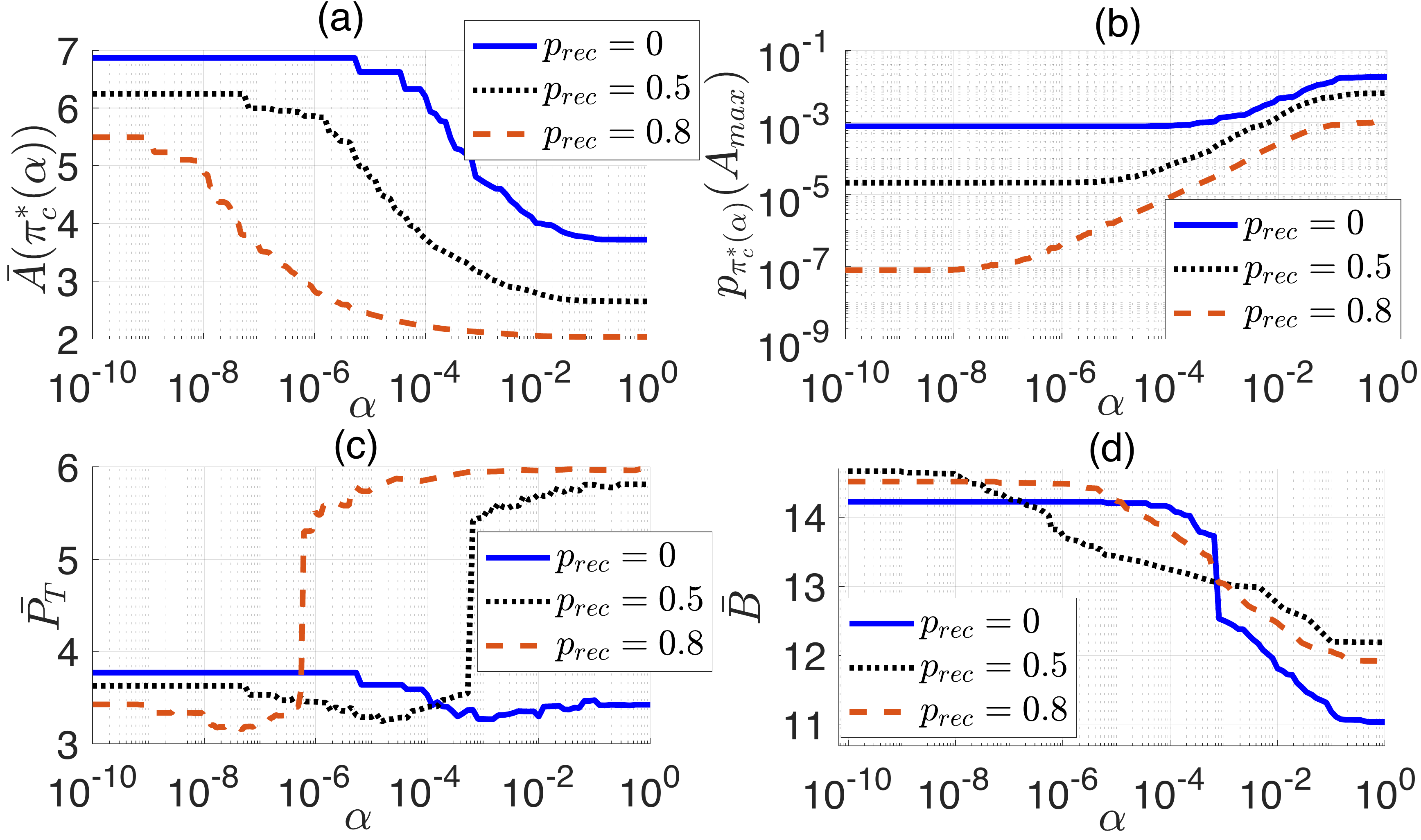}
	\caption{(a) Average age, (b) peak-age hitting probability, (c) average TX power and (d) average battery level achieved by the joint-optimal policy $\pi_c^*(\alpha)$, as functions of $\alpha$, for three different battery recovery probability.}
	\label{fig:ex2comb}
	\vspace{-1mm}
\end{figure}
\section{Conclusion}
We investigated the average and worst-case optimal policies for an energy harvesting age of information (AoI) system, where a sensor takes advantage of multiple available transmission (TX) modes and probabilistic battery recovery. After obtaining the optimal policies in terms of average age or peak-age hitting probability, which quantify the performance in terms of the average and worst-case performances, respectively, as solutions of corresponding Markov Decision Process formulations, we first showed the benefits of incorporating multiple transmission modes and the battery recovery effect. We then proposed a joint formulation that considers both performance metrics together and enables the system to operate at any point between average-optimal and worst-case-optimal policies.
 \bibliographystyle{ieeetr}
 \bibliography{whole}
\end{document}